\definecolor{DarkGreen}{rgb}{0.2,0.6,0.2}
\newcolumntype{Y}{>{\centering\arraybackslash}X}
\newcolumntype{C}{>{\centering\arraybackslash}X}
 \def\ignore#1{}
\def\bR{{\mathbb R}}
  \def\bN{\mathbb N} 
\def\bT{\mathbb T}
\numberwithin{equation}{section}
  \def\cE{{\mathscr E}}
\def\cL{{\mathscr L}}
\newtheorem{theorem}{Theorem}[section]
\newtheorem{proposition}[theorem]{Proposition}
\theoremstyle{definition}
\newtheorem{definition}[theorem]{Definition}
\newtheorem{example}[theorem]{Example}
 \def\<{\langle} \def\>{\rangle} 
\begin{document}

\title{Universal portfolios in continuous time:\\ an approach in pathwise It\^o calculus}

\author{Xiyue Han$^*$ and Alexander Schied\thanks{ University of Waterloo,
 200 University Ave W, Waterloo, Ontario, N2L 3G1, Canada. E-Mails: {\tt
 xiyue.han@uwaterloo.ca, aschied@uwaterloo.ca}.\hfill\break
The authors gratefully acknowledge support from the Natural Sciences and
 Engineering Research Council of Canada through grant
 RGPIN-2017-04054.  They are grateful to David Pr\"omel for pointing out the work \cite{AllanCuchieroLiuProemel}. }} \date{  \normalsize  First version: April 16, 2025\\
 this version: July 11, 2026}

\maketitle
\begin{abstract}
We provide a simple and straightforward approach to a continuous-time version of Cover's universal portfolio strategies within the model-free context of F\"ollmer's pathwise It\^o calculus. We establish the existence of the universal portfolio strategy and prove that its portfolio value process is the average of all values of constant rebalanced strategies. This result relies on a systematic comparison between two alternative descriptions of self-financing trading strategies within pathwise It\^o calculus. We moreover provide a comparison result for the performance and the realized volatility and variance of constant rebalanced portfolio strategies.
\end{abstract}


\section{Introduction}

In mainstream finance, the price evolution of a risky asset is usually modeled as a stochastic process defined on some probability space. It is therefore subject to model uncertainty. In some situations, however, it is possible to construct continuous-time strategies on a path-by-path basis and without making any probabilistic assumptions on the asset price evolution. For instance, such a model-free approach is possible for constant-proportion portfolio insurance (CPPI), which the second author investigated while  discussing the paper \cite{ZagstKraus} with Rudi Zagst at TU M\"unchen. The results of this investigation were eventually published in \cite{SchiedCPPI}. The present paper is a continuation of the strictly pathwise approach taken in \cite{SchiedCPPI} and some later follow-up work such as \cite{SchiedSpeiserVoloshchenko,SchiedVoloshchenkoArbitrage}. The pathwise, model-free approach to continuous-time finance rests on the seminal  work of F\"ollmer \cite{FoellmerIto,FoellmerECM} and its application to option pricing by Bick and Willinger \cite{BickWillinger} and Lyons \cite{Lyons95}. Recent work on model-free finance includes, for instance,  Bartl et al.~\cite{BartlKupperProemel}, Chiu and Cont \cite{ChiuCont},  Karatzas and Kim \cite{KaratzasKim}, and Allan, Cuchiero, Liu, and Pr\"omel \cite{AllanCuchieroLiuProemel}.

In Section~\ref{portfolio strategies section}, we carry out a systematic comparison between two possible descriptions of continuous-time trading strategies in F\"ollmer's pathwise It\^o calculus. On the one hand, there are self-financing trading strategies that describe the number of shares of each asset that are held at time $t$. On the other hand, there are portfolio strategies that describe what percentage of the available capital is currently invested in each asset. Proposition~\ref{portfolio prop}, our first result, states that both descriptions are basically equivalent under a no-short-sales constraint. Examples~\ref{const port ex} and~\ref{CPPI ex} discuss constant rebalanced portfolio strategies and CPPI, respectively. 

Section~\ref{universal portfolio section}
 contains a construction of a continuous-time version of the universal portfolio strategy, which was introduced by Cover in \cite{Cover}. Since Cover's original construction in discrete time is model-free, it is natural to attempt a similar construction within the model-free framework of pathwise It\^o calculus. Earlier approaches to this question were  given by Jamshidian \cite{Jamshidian1992} in a setup in which prices are modeled by It\^o processes and by Cuchiero, Schachermayer, and Wong \cite{CuchierEtAl}, where a  general class of Cover-type strategies was obtained by averaging over certain functionally generated portfolios. Subsequently, Allan, Cuchiero, Liu, and Pr\"omel \cite{AllanCuchieroLiuProemel} constructed pathwise versions of Cover's universal portfolio for sets of admissible portfolios, which are controlled paths in the rough path sense. Here, we restrict ourselves to  Cover's original idea and the simple setting of F\"ollmer's pathwise It\^o calculus, which avoids the intricacies of rough path theory. In this simple framework, we provide a straightforward construction of the continuous-time universal portfolio strategy by averaging over all constant rebalanced strategies. Our  main result is Theorem~\ref{universal portf thm}, which establishes the existence of the universal portfolio strategy and identifies its portfolio value process as the average of all values of constant rebalanced strategies.

 In Section~\ref{cb portfolio section}, we investigate further properties of constant rebalanced portfolio strategies in our model-free context of pathwise It\^o calculus. Part \ref{const rebal prop thm 1} of Theorem~\ref{const rebal prop thm} shows that a constant rebalanced portfolio strategy beats the weighted geometric mean of the underlying assets, a formula that can be regarded as an It\^o
analogue of the classical inequality between the weighted arithmetic and
geometric means. Theorem~\ref{const rebal prop thm} also  provides upper bounds on the realized variance and the realized volatility of the constant rebalanced portfolio strategy. These bounds are given in terms of the weighted averages of the realized variance and volatility of the underlying assets, respectively. 

For the convenience of the reader, some notation, terminology, and preliminary results  on pathwise It\^o  calculus are collected in Section~\ref{Ito section}.
 
\section{Portfolio strategies as self-financing trading strategies}\label{portfolio strategies section}

Throughout this paper, we work in a strictly pathwise setting as in \cite{SchiedCPPI,SchiedSpeiserVoloshchenko}. We refer to those papers and to Section \ref{Ito section}
 for terminology, notation, and background on pathwise It\^o calculus as introduced by F\"ollmer  \cite{FoellmerIto}. In this setting, the prices of $d$ liquidly traded assets are modeled as continuous functions $S^1,\dots,S^d:[0,\infty)\to(0,\infty)$. These assets may consist of stocks, bonds, currencies, commodity futures, derivatives, etc.  We assume furthermore that we are given a refining sequence of partitions $(\bT_n)_{n\in\bN}$ along which asset prices are observed. In a financial context, it is natural to  think of the partitions  $\bT_n$ as corresponding to ever refining data sets, e.g.,  from yearly to quarterly to monthly and  all the way down to millisecond and  tick intervals. We assume that the trajectories $S^i$ possess continuous quadratic variations $\<S^i\>$ and covariations $\<S^i,S^j\>$, so that $\bm S:=(S^1,\dots, S^d)\in QV^d$ (see Section \ref{Ito section} for this terminology). Here and in the sequel, we use bold font to denote vectors, and \lq\lq$\cdot$" will denote the inner product between two vectors.
 
 A \emph{trading strategy} is a function $\bm\xi:[0,\infty)\to\bR^d$ that is an admissible integrand for $\bm S$ in the sense of Definition~\ref{adm def}. It is called \emph{self-financing} if its value process $V_t:=\bm\xi\cdot\bm S_t=\sum_i\xi^i_tS^i_t$ satisfies
 $$V_t=V_0+\int_0^t\bm\xi_s\,d\bm S_s,\qquad t\ge0,
 $$
 where the integral on the right-hand side is the F\"ollmer integral (see Theorem~\ref{FoellmerThm}). Let 
 $$\Delta:=\Big\{\bm x=(x^1,\dots,x^d):x^i\ge0\text{ for }i=1,\dots,d\text{ and }\sum_{i=1}^dx^i=1\Big\}
 $$
 denote the $(d-1)$-dimensional standard simplex. A \emph{portfolio strategy} is a function $\bm\pi:[0,\infty)\to\Delta$ that is an admissible integrand for the It\^o logarithm $\cL(\bm S)$ as defined in Definition~\ref{DD def} and \eqref{vector cL eq}. The \emph{portfolio value process} of $\bm\pi$ is defined as 
 \begin{equation}\label{Vpi eq}
 V^{\bm\pi}_t:=\cE\bigg(\int_0^\cdot\bm\pi_s\,d\cL(\bm S)_s\bigg)_t,\qquad t\ge0,
 \end{equation}
 where $\cE(X)$ denotes the Dol\'eans--Dade exponential of $X\in QV$ (see Definition~\ref{DD def}).
 
 \begin{proposition}\label{portfolio prop} There exists a one-to-one correspondence between all portfolio strategies $\bm\pi$ and all self-financing strategies $\bm\xi$ that satisfy the no-short-sales condition $\xi^i_t\ge0$ for all $i$ and $t$ and whose value process satisfies $V_0=1$ and $V_t>0$ for all $t$. With the notation 
 $$\frac{\bm \pi_t}{\bm S_t}:=\Big(  \frac{\pi^{1}_t}{S^{1}_t},\ldots,\frac{\pi^{d}_t}{S^{d}_t} \Big),$$
  it is given by the identity
 \begin{equation}\label{pi xi eq}
 \frac{\bm \pi_t}{\bm S_t}=\frac{\bm\xi_t}{V_t},\qquad  t\ge0,
 \end{equation}
 and in this case we have 
 $$V_t=V^{\bm \pi}_t=\cE\bigg(\int_0^\cdot\frac{\bm\pi_s}{\bm S_s}\,d\bm S_s\bigg)_t,\qquad t\ge0.$$
  \end{proposition}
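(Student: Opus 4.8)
The plan is to verify the claimed correspondence in both directions, using the key identity $V_t = \boldsymbol\xi_t \cdot \boldsymbol S_t$ together with the It\^o-calculus machinery recalled in Section~\ref{Ito section}. First I would start from a self-financing strategy $\boldsymbol\xi$ satisfying the no-short-sales constraint with $V_0 = 1$ and $V_t > 0$, and define $\boldsymbol\pi_t$ by $\pi^i_t := \xi^i_t S^i_t / V_t$. Since $\xi^i_t \ge 0$, $S^i_t > 0$, and $V_t > 0$, each $\pi^i_t \ge 0$; and $\sum_i \pi^i_t = (\sum_i \xi^i_t S^i_t)/V_t = V_t/V_t = 1$, so $\boldsymbol\pi_t \in \Delta$. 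This is exactly the relation $\boldsymbol\pi_t/\boldsymbol S_t = \boldsymbol\xi_t/V_t$ after multiplying componentwise by $\boldsymbol S_t$. Conversely, given a portfolio strategy $\boldsymbol\pi$, I would \emph{define} $V_t := V^{\boldsymbol\pi}_t$ by \eqref{Vpi eq} and then set $\boldsymbol\xi_t := V_t \,\boldsymbol\pi_t/\boldsymbol S_t$, i.e. $\xi^i_t = V_t \pi^i_t / S^i_t$; one then checks $V_t > 0$ (Dol\'eans--Dade exponentials of elements of $QV$ are positive when the driving process has no jumps, which holds here since $\int_0^\cdot \boldsymbol\pi_s\,d\cL(\boldsymbol S)_s \in QV$ is continuous), $V_0 = 1$ (the exponential starts at $1$), and $\boldsymbol\xi_t \cdot \boldsymbol S_t = V_t \sum_i \pi^i_t = V_t$, so the value-process consistency $V_t = \boldsymbol\xi\cdot\boldsymbol S_t$ holds automatically.

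The substantive step is to show that, under the identification \eqref{pi xi eq}, the self-financing condition for $\boldsymbol\xi$ is equivalent to the Dol\'eans--Dade representation \eqref{Vpi eq} for $V^{\boldsymbol\pi}$, which amounts to proving the final displayed identity
$$
V_t = \cE\bigg(\int_0^\cdot \frac{\boldsymbol\pi_s}{\boldsymbol S_s}\,d\boldsymbol S_s\bigg)_t .
$$
The idea is to relate the F\"ollmer integral $\int_0^t \boldsymbol\xi_s\,d\boldsymbol S_s = \int_0^t \frac{V_s}{\boldsymbol S_s}\cdot\boldsymbol\pi_s\,d\boldsymbol S_s$ (meaning $\sum_i \int_0^t \frac{V_s \pi^i_s}{S^i_s}\,dS^i_s$) to the stochastic logarithm: by the definition of $\cL(\boldsymbol S)$ in Definition~\ref{DD def} and \eqref{vector cL eq}, the $i$-th component of $\cL(\boldsymbol S)$ has differential $d\cL(S^i)_s = dS^i_s/S^i_s$, so formally $\int_0^t \boldsymbol\pi_s\,d\cL(\boldsymbol S)_s = \sum_i \int_0^t \frac{\pi^i_s}{S^i_s}\,dS^i_s$. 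Hence if $V$ is self-financing, $dV_s = \sum_i \frac{V_s \pi^i_s}{S^i_s}\,dS^i_s = V_s\, d\big(\int_0^\cdot \boldsymbol\pi_r\,d\cL(\boldsymbol S)_r\big)_s$, which is precisely the linear F\"ollmer equation solved by the Dol\'eans--Dade exponential; invoking the uniqueness/characterization of $\cE$ in Definition~\ref{DD def} gives $V = V^{\boldsymbol\pi}$. Running the implication backwards shows that $V^{\boldsymbol\pi}$ with $\boldsymbol\xi := V^{\boldsymbol\pi}\boldsymbol\pi/\boldsymbol S$ is self-financing.

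The main obstacle — and the point where care is genuinely needed rather than just bookkeeping — is the \emph{admissibility} of the integrands: I must check that $\boldsymbol\xi = V\boldsymbol\pi/\boldsymbol S$ is an admissible integrand for $\boldsymbol S$ in the sense of Definition~\ref{adm def} whenever $\boldsymbol\pi$ is admissible for $\cL(\boldsymbol S)$ (and vice versa), and that the associativity/change-of-integrator rules for F\"ollmer integrals used in the chain $\int \boldsymbol\xi\,d\boldsymbol S = \int V\boldsymbol\pi\,d\cL(\boldsymbol S)$ are valid in this pathwise setting. This requires knowing that $V$ (resp.\ $1/\boldsymbol S^i$) is itself a sufficiently regular process in $QV$ so that products and quotients of such processes remain admissible integrands and the pathwise It\^o formula and integration-by-parts apply — facts that should be available from Section~\ref{Ito section}. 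Once admissibility and the integral-transformation rules are in hand, the equivalence of the two differential equations and the bijectivity of \eqref{pi xi eq} follow by direct substitution as sketched above.
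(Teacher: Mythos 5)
Your algebraic setup of the bijection (componentwise $\pi^i_t=\xi^i_tS^i_t/V_t$, the check that $\bm\pi_t\in\Delta$, and the converse definition $\xi^i_t=V^{\bm\pi}_t\pi^i_t/S^i_t$) matches the content of \eqref{pi xi eq}, and you correctly locate the substantive issue in the admissibility of the integrands. But that is precisely the step you do not carry out, and it cannot be dispatched by the general principle you invoke: in this framework an admissible integrand for $\bm S$ must, by Definition~\ref{adm def}, be of the form $\nabla_{\bm x}f(\bm S_t,\bm A_t)$ for some $f\in C^{2,1}$ and some $CBV$ path $\bm A$, and such gradient-type integrands are \emph{not} closed under multiplication or division by an arbitrary scalar process (a scalar multiple of a gradient is generally not a gradient). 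The paper's proof hinges on a specific structural trick for the direction starting from $\bm\xi$: writing $\bm\xi_t=\nabla_{\bm x}f(\bm S_t,\bm A_t)$, using Theorem~\ref{FoellmerThm} to get $V_t=f(\bm S_t,\bm A_t)+B_t$ with $B$ of bounded variation, and then observing that $\bm\xi_t/V_t=\nabla_{\bm x}g(\bm S_t,\bm A_t,B_t)$ for $g(\bm x,\bm a,b)=\log(f(\bm x,\bm a)+b)$, which exhibits $\bm\pi_t/\bm S_t$ as an admissible integrand for $\bm S$; admissibility of $\bm\pi$ for $\cL(\bm S)$ then follows from \eqref{log Ito eq} together with the associativity theorem for the F\"ollmer integral (cited from earlier work). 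For the converse direction the paper likewise does not rely on generic closure properties but cites a lemma establishing the self-financing property and admissibility of $V^{\bm\pi}\bm\pi/\bm S$. Your sketch replaces all of this by the assertion that the needed regularity/closure "should be available from Section~\ref{Ito section}", which it is not; without the logarithm construction (or an equivalent argument) the proof is incomplete.

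A secondary issue: you conclude $V=V^{\bm\pi}$ by "invoking the uniqueness/characterization of $\cE$ in Definition~\ref{DD def}", but that definition only gives an explicit formula for $\cE$ and contains no uniqueness statement for the pathwise linear It\^o equation. The paper avoids any uniqueness argument here, obtaining the identity $V_t=V^{\bm\pi}_t=\cE\big(\int_0^\cdot\frac{\bm\pi_s}{\bm S_s}\,d\bm S_s\big)_t$ directly from \eqref{Vpi eq}, \eqref{log Ito eq}, and associativity of the F\"ollmer integral (a uniqueness theorem is invoked only later, in the proof of the universal-portfolio theorem, with an external citation). So either argue via associativity as the paper does, or explicitly import a uniqueness result for the equation $dV_t=V_t\,dX_t$; it does not follow from the definition of $\cE$ alone.
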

 
 \begin{proof} If a portfolio strategy $\bm\pi$ is given, then the assertion follows from Lemma 2.5 in \cite{SchiedSpeiserVoloshchenko} (note that it is assumed in \cite{SchiedSpeiserVoloshchenko} that one of the assets is a money market account; this assumption can easily be dropped if needed). 
 
 Now assume that a self-financing strategy $\bm\xi$ is given that satisfies the no-short-sales condition $\xi^i_t\ge0$ for all $i$ and $t$ and whose value process satisfies $V_0=1$ and $V_t>0$ for all $t$. Let $T>0$ be given. By Definition \ref{adm def}, there exists $n\in\bN$,  open sets $U\subset\mathbb R^d$ and   $W\subset\mathbb R^n$,  a function $f\in C^{2,1}(U\times W)$, and  $\bm  A\in CBV([0,T],W)$ such that $\bm S_t\in U$ and  $\bm\xi_t=\nabla_{ \bm  x}f(\bm S_t,\bm A_t)$ for $0\le t\le T$. By Theorem \ref{FoellmerThm},
 \begin{align*}
 V_t&=V_0+\int_0^t\bm\xi_s\,d\bm S_s=f(\bm S_t,\bm A_t)+B_t, \end{align*}
 where $B$ is a sum of Riemann--Stieltjes integrals with integrators $A^k$ and $\<S^i,S^j\>$, and hence $(\bm A,B) \in CBV([0,T],W\times\bR)$.  Let
 $$O:=\{(\bm x,\bm a,b)\in U\times W\times\bR:f(\bm x,\bm a)+b>0\}.$$
Then $O$ is an open set containing $(\bm S_t,\bm A_t,B_t)$ for $0\le t\le T$ as $V_t > 0$, and, if we define $g(\bm x, \bm a,b):=\log( f(\bm x,\bm a)+b)$ on $O$, then $g\in C^{2,1}(O)$ and we have
  $$\nabla_{\bm x}g(\bm x,\bm a,b)=\frac{\nabla_{\bm x}f(\bm x,\bm a)}{f(\bm x,\bm a)+b}.
 $$
 Hence, if we define $\bm\pi$ via \eqref{pi xi eq}, then 
 $$ \frac{\bm \pi_t}{\bm S_t}=\frac{\bm\xi_t}{V_t}=\frac{\nabla_{\bm x}f(\bm S_t,\bm A_t)}{f(\bm S_t,\bm A_t)+B_t}=\nabla_{\bm x}g(\bm S_t,\bm A_t,B_t)$$
  is an admissible integrand for $\bm S$. It now follows from \eqref{log Ito eq} and the associativity formula for the F\"ollmer integral in \cite[Theorem 13]{SchiedCPPI} that $\bm\pi$ is an admissible integrand for $\cL(\bm S)$. Finally, the formula for $V$ and $V^{\bm\pi}$ now follows from \eqref{Vpi eq},  \eqref{log Ito eq}, and associativity.
 \end{proof}

\begin{example}\label{const port ex} A \emph{constant rebalanced portfolio strategy} is a portfolio strategy that is constant in time. We will henceforth identify the strategy $\bm\pi$ with the constant value it takes in the $(d-1)$-dimensional standard simplex $\Delta$. A special case is the popular \emph{equal-weight strategy} $\bm\pi^{\text{eq}}=(1/d,\dots,1/d)$. The portfolio value process of a constant rebalanced portfolio strategy $\bm\pi\in\Delta$ is given by
\begin{equation}\label{portf value for const reb strat}
V_t^{\bm\pi}=\cE\big(\bm\pi\cdot(\cL(\bm S)_\cdot-\cL(\bm S)_0)\big)_t.
\end{equation}
We will further analyze constant rebalanced portfolio strategies in Section~\ref{cb portfolio section}.
\end{example}

\begin{example}[CPPI]\label{CPPI ex} Constant Proportion Portfolio Insurance (CPPI) was first studied by Perold \cite{Perold}, Black and Jones \cite{BlackJones}, and Black and Perold \cite{BlackPerold}. It provides a strategy that yields superlinear participation in future asset returns while retaining a guarantee for part of the invested capital (\lq\lq the floor"). 
It is typically analyzed in a two-asset market in which $S:=S^1$  is a risky asset and $B:=S^2$ is a money market account. That is, 
$
B_t=\exp(\int_0^tr_s\,ds)$,
where $r:[0,\infty)\to\bR$ is measurable and satisfies $\int_0^t|r_s|\,ds<\infty$ for all $t>0$. In this situation, one is looking for a self-financing trading strategy, or equivalently for a possibly leveraged weight process $\bm\pi$ with $\pi^1_t+\pi^2_t=1$, not necessarily taking values in $\Delta$, that satisfies the following two conditions:
\begin{enumerate}
\item  The portfolio value $V^{\bm\pi}_t$ should never fall below the floor $\alpha B_t$, which one would have attained by investing the fraction $\alpha $ of the initial capital into the money market account right from the start. 
\item A leveraged multiple $m>0$ of the \lq\lq cushion", $C_t:=V^{\bm\pi}_t-\alpha B_t$, should be invested into the risky asset. That is, $\pi^1_t=mC_t/V^{\bm\pi}_t$, while the remaining fraction $\pi^2_t=1-\pi^1_t$ will be held in the money market account. 
\end{enumerate}
For $m>1$, the money-market weight $\pi_t^2$ may be negative.
It was shown in \cite{SchiedCPPI} that this problem can be solved in our present pathwise setting by first letting 
$$C_t=(1-\alpha)\Big(\frac{S_t}{S_0}\Big)^mB_t^{1-m}e^{-\frac{1}2m(m-1)\<\log S\>_t}
$$
and then solving for $\bm\pi$ from the identity $V^{\bm\pi}_t=C_t+\alpha B_t$, the formula \eqref{Vpi eq} for $V^{\bm\pi}_t$, and the It\^o differential equation \eqref{DD SDE eq} satisfied by $V^{\bm\pi}_t$. A comparison with options-based portfolio insurance (OBPI) was conducted by Zagst and Kraus \cite{ZagstKraus}.
\end{example}

\section{The universal portfolio strategy}\label{universal portfolio section}

Universal portfolios were introduced in discrete time by Cover \cite{Cover}.  Jamshidian \cite{Jamshidian1992} provided a continuous-time approach in a setup in which prices are modeled by It\^o processes. Cuchiero et al.~\cite{CuchierEtAl} introduced a general class of Cover-type strategies in continuous time by averaging over classes of functionally generated portfolios.  Subsequently, Allan et al.~\cite{AllanCuchieroLiuProemel} constructed pathwise versions of Cover's universal portfolio for sets of admissible portfolios, which are controlled paths in the rough path sense. Here, we restrict ourselves to  Cover's original idea and the simple setting of F\"ollmer's pathwise It\^o calculus, which avoids the intricacies of rough path theory. In this framework, we provide a straightforward construction  of universal portfolios in continuous time that is based on constant rebalanced strategies and is thus a direct continuous-time analogue of the discrete-time case in  \cite{Cover}.  The following theorem   establishes the universal portfolio strategy as an average of constant rebalanced strategies, weighted by the performance of their portfolio value processes.

\begin{theorem}\label{universal portf thm} For a Borel probability measure $\mu$ on $\Delta$, we define 
$$\widehat V_t:=\int_{\Delta}V^{\bm\pi}_t\,\mu(d\bm\pi).
$$
Then 
$$\widehat{\bm\pi}_t:=\frac{\int_{\Delta}\bm\pi V^{\bm\pi}_t\,\mu(d\bm\pi)}{\widehat V_t}
$$
is a portfolio strategy with portfolio value process $\widehat V$. It is called the {\bf universal portfolio strategy}. 
\end{theorem}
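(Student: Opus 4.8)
The plan is to pass to the self-financing trading-strategy picture of Proposition~\ref{portfolio prop}, in which the relevant operation --- the F\"ollmer integral --- is \emph{linear} in the integrand, unlike the Dol\'eans--Dade exponential in \eqref{Vpi eq}. The first step is to record an explicit representation of the value of a constant rebalanced strategy: combining \eqref{portf value for const reb strat} with the definitions of $\cL(\bm S)$ and of the Dol\'eans--Dade exponential yields $V^{\bm\pi}_t=G(\bm\pi,\bm S_t,\bm A_t)$, where $\bm A_t=(A^{ij}_t)_{1\le i,j\le d}$ with $A^{ij}_t:=\int_0^t\frac{d\<S^i,S^j\>_s}{S^i_sS^j_s}$ are continuous paths of bounded variation that do not depend on $\bm\pi$, and
\[
G(\bm\pi,\bm x,\bm a):=\prod_{i=1}^d\Big(\frac{x^i}{S^i_0}\Big)^{\pi^i}\exp\Big(\tfrac12\sum_i\pi^ia^{ii}-\tfrac12\sum_{i,j}\pi^i\pi^ja^{ij}\Big)
\]
is $C^\infty$ on $(0,\infty)^d\times\bR^{d^2}$ and, together with all of its partial derivatives, jointly continuous in $(\bm\pi,\bm x,\bm a)$. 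Since $\nabla_{\bm x}G(\bm\pi,\bm x,\bm a)=\big(\pi^1/x^1,\dots,\pi^d/x^d\big)G(\bm\pi,\bm x,\bm a)$, the self-financing strategy that Proposition~\ref{portfolio prop} attaches to the constant rebalanced portfolio $\bm\pi$ is $\bm\xi^{\bm\pi}_t=\nabla_{\bm x}G(\bm\pi,\bm S_t,\bm A_t)$; and the self-financing identity $V^{\bm\pi}_t=1+\int_0^t\bm\xi^{\bm\pi}_s\,d\bm S_s$ is, by the pathwise It\^o formula (Theorem~\ref{FoellmerThm}), equivalent to saying that the bounded-variation remainder $R^{\bm\pi}_t$ produced when Theorem~\ref{FoellmerThm} is applied to $G(\bm\pi,\bm S_\cdot,\bm A_\cdot)$ --- namely $R^{\bm\pi}_t=\sum_{i,j}\int_0^t\partial_{a^{ij}}G\,dA^{ij}_s+\tfrac12\sum_{i,j}\int_0^t\partial_{x^ix^j}G\,d\<S^i,S^j\>_s$ --- vanishes identically in $t$.

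Next I would set $\widehat G(\bm x,\bm a):=\int_\Delta G(\bm\pi,\bm x,\bm a)\,\mu(d\bm\pi)$. Since $\Delta$ is compact and $G$ together with its derivatives is jointly continuous, all these derivatives are bounded on $\Delta\times K$ for each compact $K\subset(0,\infty)^d\times\bR^{d^2}$, so differentiation under the integral sign gives $\widehat G\in C^{2,1}\big((0,\infty)^d,\bR^{d^2}\big)$ with $\nabla_{\bm x}\widehat G=\int_\Delta\nabla_{\bm x}G\,\mu(d\bm\pi)$ and the analogous identities for the second-order $\bm x$-derivatives and the first-order $\bm a$-derivatives. Consequently $\widehat V_t=\widehat G(\bm S_t,\bm A_t)$, so $\widehat V_0=1$ and $\widehat V_t>0$; the function $\widehat{\bm\xi}_t:=\nabla_{\bm x}\widehat G(\bm S_t,\bm A_t)=\int_\Delta\bm\xi^{\bm\pi}_t\,\mu(d\bm\pi)$ is an admissible integrand for $\bm S$; it satisfies the no-short-sales condition because $\widehat\xi^i_t=\int_\Delta\frac{\pi^iV^{\bm\pi}_t}{S^i_t}\,\mu(d\bm\pi)\ge0$; and its value process is $\sum_i\widehat\xi^i_tS^i_t=\int_\Delta\big(\sum_i\pi^i\big)V^{\bm\pi}_t\,\mu(d\bm\pi)=\widehat V_t$.

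It remains to verify that $\widehat{\bm\xi}$ is self-financing. Applying Theorem~\ref{FoellmerThm} to $\widehat G$ gives $\widehat V_t=1+\int_0^t\widehat{\bm\xi}_s\,d\bm S_s+R_t$, where $R_t=\sum_{i,j}\int_0^t\partial_{a^{ij}}\widehat G(\bm S_s,\bm A_s)\,dA^{ij}_s+\tfrac12\sum_{i,j}\int_0^t\partial_{x^ix^j}\widehat G(\bm S_s,\bm A_s)\,d\<S^i,S^j\>_s$. Inserting the identities $\partial_{a^{ij}}\widehat G=\int_\Delta\partial_{a^{ij}}G\,\mu(d\bm\pi)$ and $\partial_{x^ix^j}\widehat G=\int_\Delta\partial_{x^ix^j}G\,\mu(d\bm\pi)$ and invoking the classical Fubini theorem for Riemann--Stieltjes integrals --- legitimate because the integrands are jointly continuous on the compact set $\Delta\times[0,t]$ and the integrators $A^{ij}$ and $\<S^i,S^j\>$ have finite variation --- one may interchange $\int_\Delta(\,\cdot\,)\,\mu(d\bm\pi)$ with the time integrals to obtain $R_t=\int_\Delta R^{\bm\pi}_t\,\mu(d\bm\pi)=0$. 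Hence $\widehat V_t=1+\int_0^t\widehat{\bm\xi}_s\,d\bm S_s$, so $\widehat{\bm\xi}$ is a self-financing strategy satisfying all the hypotheses of Proposition~\ref{portfolio prop}. The portfolio strategy that Proposition~\ref{portfolio prop} associates with $\widehat{\bm\xi}$ through \eqref{pi xi eq} then has $i$-th coordinate $S^i_t\widehat\xi^i_t/\widehat V_t=\big(\int_\Delta\pi^iV^{\bm\pi}_t\,\mu(d\bm\pi)\big)/\widehat V_t=\widehat\pi^i_t$, and its portfolio value process is the value process of $\widehat{\bm\xi}$, namely $\widehat V$. This is precisely the assertion of the theorem.

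The only genuine work lies in the two interchange-of-limits steps --- differentiating $\widehat G$ under the integral sign and the Fubini argument for the remainder $R_t$ --- and both become routine once one notices that every constant rebalanced strategy is represented through the \emph{same} pair $\big((0,\infty)^d,\bm A\big)$ by a function $G(\bm\pi,\cdot,\cdot)$ that is smooth and depends, together with its derivatives, continuously on the parameter $\bm\pi$ ranging over the compact simplex $\Delta$. The conceptual point is exactly this reduction, together with the decision to exploit the linearity of the F\"ollmer integral rather than to manipulate the Dol\'eans--Dade exponential of $\widehat{\bm\pi}$ directly.
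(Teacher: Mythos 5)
Your argument is correct, but it takes a genuinely different route from the paper. The paper stays entirely in the portfolio-strategy picture: it writes $\bm L=\cL(\bm S)-\cL(\bm S)_0$, represents $\widehat{\bm\pi}_t=\nabla_{\bm x}f(\bm L_t,\bm A_t)$ with $f(\bm x,\bm a)=\log\int_\Delta e^{\bm\pi\cdot\bm x-\frac12\bm\pi^\top\bm a\bm\pi}\,\mu(d\bm\pi)$ to get admissibility, and then shows that $\widehat V_t=g(\bm L_t,\bm A_t)$ with $g=e^f$ satisfies the same pathwise It\^o equation $dV_t=V_t\,\widehat{\bm\pi}_t\,d\bm L_t$ as $V^{\widehat{\bm\pi}}$, concluding by the uniqueness theorem for this equation from Mishura--Schied. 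You instead pass through Proposition~\ref{portfolio prop} in both directions: you average the self-financing share holdings $\bm\xi^{\bm\pi}_t=\bm\pi V^{\bm\pi}_t/\bm S_t$ over $\mu$, use differentiation under the integral sign to see that $\widehat{\bm\xi}=\int_\Delta\bm\xi^{\bm\pi}\,\mu(d\bm\pi)$ is admissible, and exploit the linearity of the F\"ollmer integral plus a Fubini argument for the bounded-variation remainders (each $R^{\bm\pi}\equiv 0$ by the self-financing property of the constant strategies) to conclude that $\widehat{\bm\xi}$ is self-financing with value $\widehat V$; converting back via \eqref{pi xi eq} identifies the associated portfolio strategy as $\widehat{\bm\pi}$. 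Your route avoids the external uniqueness result altogether and makes transparent the ``mixture of constant rebalanced strategies'' interpretation at the level of share holdings, at the cost of two interchange-of-limits verifications and both directions of Proposition~\ref{portfolio prop}; the paper's route is shorter but leans on the ODE-uniqueness theorem as a black box. One small streamlining: rather than taking $A^{ij}_t=\int_0^t(S^i_sS^j_s)^{-1}\,d\<S^i,S^j\>_s$, which tacitly uses the (true, but not stated in the paper's preliminaries) transformation rule $\<\log S^i,\log S^j\>_t=\int_0^t(S^i_sS^j_s)^{-1}\,d\<S^i,S^j\>_s$, you could simply define $A^{ij}_t:=\<\log S^i,\log S^j\>_t$, which is continuous and of bounded variation and makes the representation $V^{\bm\pi}_t=G(\bm\pi,\bm S_t,\bm A_t)$ immediate from \eqref{portf value for const reb strat} and Definition~\ref{DD def}.
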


\begin{proof} Let us write $\bm L_t:=\cL(\bm S)_t-\cL(\bm S)_0\in QV^d$. By \eqref{portf value for const reb strat},  the portfolio value process of the constant rebalanced portfolio strategy $\bm\pi\in\Delta$ is
$$V^{\bm\pi}_t=\cE(\bm\pi\cdot\bm L)_t=\exp\bigg(\bm\pi\cdot \bm L_t-\frac12\sum_{i,j=1}^d\pi^i\pi^j\<L^i,L^j\>_t\bigg).
$$
Now we define a function $f:\bR^d\times\bR^{d\times d}\to\bR$ by
$$f(\bm x,\bm a):=\log\int_\Delta e^{\bm\pi\cdot\bm x-\frac12\bm\pi^\top\bm a\bm\pi}\,\mu(d\bm\pi).
$$
By dominated convergence, $f$ is smooth on its domain and 
$$\nabla_{\bm x}f(\bm x,\bm a)=\frac1{e^{f(\bm x,\bm a)}}\int_\Delta \bm\pi e^{\bm\pi\cdot\bm x-\frac12\bm\pi^\top\bm a\bm\pi}\,\mu(d\bm\pi).
$$
Hence, for the matrix-valued $CBV$-function $\bm A$ with components  $A^{ij}_t:=\<L^i,L^j\>_t$, 
$$\nabla_{\bm x}f(\bm L_t,\bm A_t)=\frac{\int_\Delta \bm\pi V^{\bm\pi}_t\,\mu(d\bm\pi)}{\int_\Delta V^{\bm\pi}_t\,\mu(d\bm\pi)}=\widehat{\bm\pi}_t
$$
is an admissible integrand for $\bm L$ and in turn for $\cL(\bm S)$. Therefore $\widehat{\bm\pi}$ is a portfolio strategy. 

Let $V^{\widehat{\bm\pi}}$ be the portfolio value process of $\widehat{\bm\pi}$. We must show that $V^{\widehat{\bm\pi}}=\widehat V$. We know that $V^{\widehat{\bm\pi}}_0=1=\widehat V_0$ and that 
\begin{equation}\label{widehat pi value process SDE eq}
dV^{\widehat{\bm\pi}}_t= V^{\widehat{\bm\pi}}_t\widehat{\bm\pi}_t\,d\bm L_t,
\end{equation}
where we have used \eqref{Vpi eq}, \eqref{DD SDE eq}, and the associativity of the F\"ollmer integral  \cite[Theorem 13]{SchiedCPPI}. If we can show that $\widehat V$ satisfies the same It\^o differential equation~\eqref{widehat pi value process SDE eq} as $V^{\widehat{\bm\pi}}$, then the assertion will follow from uniqueness of solutions to this equation, which was established  in \cite[Theorem 2.12]{MishuraSchied}. To this end, we introduce
$$g(\bm x,\bm a):=e^{f(\bm x,\bm a)}=\int_\Delta e^{\bm\pi\cdot\bm x-\frac12\bm\pi^\top\bm a\bm\pi}\,\mu(d\bm\pi).
$$
Then 
$\widehat V_t=g(\bm L_t,\bm A_t)$.  Hence, by Theorem~\ref{FoellmerThm},
\begin{align*}
d\widehat V_t&=\nabla_{\bm x}g(\bm L_t,\bm A_t)\,d\bm L_t+\sum_{i,j=1}^dg_{a^{ij}}(\bm L_t,\bm A_t)\,dA^{ij}_t+\frac12\sum_{i,j=1}^dg_{x^i,x^j}(\bm L_t,\bm A_t)\,d\<L^i,L^j\>_t.
\end{align*}
We have
$$g_{a^{ij}}(\bm x,\bm a)=-\frac12\int_\Delta \pi^i\pi^je^{\bm\pi\cdot\bm x-\frac12\bm\pi^\top\bm a\bm\pi}\,\mu(d\bm\pi)=-\frac12 g_{x^i,x^j}(\bm x,\bm a).
$$
Together with the fact that $A^{ij}=\<L^i,L^j\>$, we arrive at 
$$d\widehat V_t=\nabla_{\bm x}g(\bm L_t,\bm A_t)\,d\bm L_t=\widehat V_t\widehat{\bm\pi}_t\,d\bm L_t,
$$
which is the desired  It\^o differential equation~\eqref{widehat pi value process SDE eq}. 
\end{proof}

\section{Properties of constant rebalanced portfolio strategies}\label{cb portfolio section}

In this section, we take a look at the performance of constant rebalanced portfolio strategies as introduced in  Example~\ref{const port ex}. 
The following theorem compares  the portfolio value of such a strategy   with the weighted geometric average of the underlying assets. We look at portfolio growth, realized variance, and realized volatility, and find that the constant rebalanced portfolio strategy outperforms the geometric average in the following senses. This explains, for instance, the underperformance of geometric stock market indices such as the Value Line Geometric Index. Part \ref{const rebal prop thm 1} of the following theorem can be regarded as an It\^o analogue of the classical inequality between the weighted arithmetic and geometric means. 
 
 \begin{theorem}\label{const rebal prop thm} The portfolio value process $V_t^{\bm\pi}$ of a  constant rebalanced portfolio strategy $\bm\pi\in\Delta$ has the following properties. 
 \begin{enumerate}
 \item\label{const rebal prop thm 1} For each $t$, the performance of $\bm \pi$ beats the $\bm\pi$-weighted geometric mean of the performances of the underlying assets:
$$V_t^{\bm\pi}\ge \prod_{i=1}^d\bigg(\frac{S^i_t}{S^i_0}\bigg)^{\pi^i}.
 $$
 \item\label{const rebal prop thm 2} The asymptotic growth rate of $V_t^{\bm\pi}$  exceeds the $\bm\pi$-weighted average asymptotic growth rate of the underlying assets. More precisely, 
 $$\limsup_{t\uparrow\infty}\frac1t\log V_t^{\bm\pi}\ge\limsup_{t\uparrow\infty}\frac1t\sum_{i=1}^d\pi^i\log S^i_t
$$
and
$$\liminf_{t\uparrow\infty}\frac1t\log V_t^{\bm\pi}\ge\liminf_{t\uparrow\infty}\frac1t\sum_{i=1}^d\pi^i\log S^i_t.
$$
\item\label{const rebal prop thm 3}  The realized variance  of $ V^{\bm\pi}$ is lower than the $\bm\pi$-weighted average realized variance  of the underlying assets:
$$\<\log  V^{\bm\pi}\>_t\le\sum_{i=1}^d\pi^i\<\log S^i\>_t.
$$
\item\label{const rebal prop thm 4} The realized volatility of $ V^{\bm\pi}$ is lower than the $\bm\pi$-weighted average realized volatility  of the underlying assets:
$$\sqrt{\<\log  V^{\bm\pi}\>_t}\le\sum_{i=1}^d\pi^i\sqrt{\<\log S^i\>_t}.
$$
 \end{enumerate}
 
  \end{theorem}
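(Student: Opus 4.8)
The plan is to reduce all four parts to a single explicit formula for $\log V^{\bm\pi}_t$ together with elementary structural properties of F\"ollmer's pathwise quadratic covariation. Writing $\bm L_t:=\cL(\bm S)_t-\cL(\bm S)_0$ as in the proof of Theorem~\ref{universal portf thm}, the defining property of the It\^o logarithm gives $L^i_t=\log(S^i_t/S^i_0)+\frac12\langle\log S^i\rangle_t$ and $\langle L^i,L^j\rangle_t=\langle\log S^i,\log S^j\rangle_t=:\sigma^{ij}_t$, where $t\mapsto\bm\sigma_t:=(\sigma^{ij}_t)$ has continuous components of bounded variation. Substituting this into \eqref{portf value for const reb strat} and using $\sum_i\pi^i=1$ to symmetrize the diagonal term, one obtains
$$\log V^{\bm\pi}_t=\sum_{i=1}^d\pi^i\log\frac{S^i_t}{S^i_0}+R_t,\qquad R_t:=\frac14\sum_{i,j=1}^d\pi^i\pi^j\,\langle\log S^i-\log S^j\rangle_t.$$
Since each $\langle\log S^i-\log S^j\rangle$ is non-decreasing and vanishes at $0$, we get $R_t\ge0$, and part~\ref{const rebal prop thm 1} follows upon exponentiating; this is exactly the announced It\^o version of the arithmetic--geometric mean inequality, with $R_t$ the non-negative defect.

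Part~\ref{const rebal prop thm 2} is then a direct consequence: dividing the above identity by $t$, discarding the non-negative term $R_t/t$, and noting that the remaining correction $\frac1t\sum_i\pi^i\log S^i_0$ tends to $0$ as $t\uparrow\infty$, both the $\limsup$ and the $\liminf$ inequalities follow, because passing to $\limsup$ (resp. $\liminf$) preserves pointwise inequalities and is insensitive to additive terms tending to $0$.

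For parts~\ref{const rebal prop thm 3} and~\ref{const rebal prop thm 4}, I would start from $\log V^{\bm\pi}_t=\bm\pi\cdot\bm L_t-\frac12\bm\pi^\top\bm\sigma_t\bm\pi$ and the fact that the correction $t\mapsto\frac12\bm\pi^\top\bm\sigma_t\bm\pi$ has finite variation; hence, by invariance of quadratic variation under $CBV$ perturbations and bilinearity of covariation,
$$\langle\log V^{\bm\pi}\rangle_t=\langle\bm\pi\cdot\bm L\rangle_t=\sum_{i,j=1}^d\pi^i\pi^j\sigma^{ij}_t=\bm\pi^\top\bm\sigma_t\bm\pi.$$
Part~\ref{const rebal prop thm 3} is immediate, since the computation behind $R_t$ shows $\bm\pi^\top\bm\sigma_t\bm\pi=\sum_i\pi^i\sigma^{ii}_t-2R_t\le\sum_i\pi^i\langle\log S^i\rangle_t$. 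For part~\ref{const rebal prop thm 4}, I would invoke the Cauchy--Schwarz (Kunita--Watanabe) inequality for pathwise covariation, $|\sigma^{ij}_t|\le\sqrt{\sigma^{ii}_t}\,\sqrt{\sigma^{jj}_t}$, which itself follows from the non-negativity of the quadratic $\lambda\mapsto\langle\lambda\log S^i+\log S^j\rangle_t$; this yields $\langle\log V^{\bm\pi}\rangle_t\le\sum_{i,j}\pi^i\pi^j\sqrt{\sigma^{ii}_t}\sqrt{\sigma^{jj}_t}=\big(\sum_i\pi^i\sqrt{\sigma^{ii}_t}\big)^2$, and taking square roots gives the claim. (One may note that part~\ref{const rebal prop thm 4} in fact refines part~\ref{const rebal prop thm 3} via the power-mean inequality $\big(\sum_i\pi^i\sqrt{\sigma^{ii}_t}\big)^2\le\sum_i\pi^i\sigma^{ii}_t$, but the defect identity proves part~\ref{const rebal prop thm 3} more directly.)

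The computations are routine; the only points requiring care are the structural facts about pathwise covariation — invariance under $CBV$ perturbations, bilinearity, non-negativity and monotonicity of $\langle\cdot\rangle$, and the resulting Cauchy--Schwarz inequality — all of which are available in the framework recalled in Section~\ref{Ito section}. The single genuinely substantive input is the identity $\sum_i\pi^i\langle\log S^i\rangle_t-\bm\pi^\top\bm\sigma_t\bm\pi=2R_t\ge0$, which simultaneously powers parts~\ref{const rebal prop thm 1}, \ref{const rebal prop thm 2}, and~\ref{const rebal prop thm 3}. Accordingly, I anticipate no real obstacle here — only the need to be sure that every property of $\langle\cdot\rangle$ invoked holds genuinely in the pathwise (rather than the semimartingale) setting.
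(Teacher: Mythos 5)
Your proposal is correct and follows essentially the same route as the paper's proof: the same explicit exponential formula for $V^{\bm\pi}_t$ with the nonnegative defect in the exponent for part~(a), part~(b) by taking logarithms, and parts~(c)--(d) reduced, via invariance of the bracket under $CBV$ perturbations and bilinearity, to the same finite-dimensional inequalities for the nonnegative definite bracket matrix $\bm\sigma_t=(\langle\log S^i,\log S^j\rangle_t)$. The only difference is in how those inequalities are justified: where the paper uses convexity of $\bm x\mapsto\bm x^\top C\bm x$ and the seminorm property of its square root (a Jensen-type argument over the simplex), you use the equivalent explicit defect identity $\sum_i\pi^i\sigma^{ii}_t-\bm\pi^\top\bm\sigma_t\bm\pi=\tfrac12\sum_{i,j}\pi^i\pi^j\langle\log S^i-\log S^j\rangle_t\ge0$ together with pairwise Cauchy--Schwarz, both of which rest on exactly the structural fact the paper also invokes (nonnegativity of pathwise brackets of linear combinations, i.e., nonnegative definiteness of $\bm\sigma_t$), so the argument goes through in the pathwise setting as you anticipate.
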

 
\begin{proof} Let $\bm X\in QV^d$ be defined by $X^i_t:=\log S^i_t$. 

\ref{const rebal prop thm 1} Using the definition of the It\^o logarithm, we can write:\begin{align*}
V_t^{\bm\pi}&=\cE\big(\bm\pi\cdot(\cL(\bm S)-\cL(\bm S)_0)\big)_t\\
&=\exp\bigg(\bm\pi\cdot(\bm X_t-\bm X_0)+\frac12\bigg(\sum_{i=1}^d\pi^i\<X^i\>_t-\sum_{i,j=1}^d\pi^i\pi^j\<X^i,X^j\>_t\bigg)\bigg)\\
&=\prod_{i=1}^d\bigg(\frac{S^i_t}{S^i_0}\bigg)^{\pi^i}
\exp\bigg(\frac12\bigg(\sum_{i=1}^d\pi^i\<X^i\>_t-\sum_{i,j=1}^d\pi^i\pi^j\<X^i,X^j\>_t\bigg)\bigg).\end{align*}
We show now that the exponent on the right-hand side is nonnegative. This will establish \ref{const rebal prop thm 1}. To this end,
consider the $d\times d$-matrix $C=(c_{ij})_{i,j=1,\dots, d}$ with entries $c_{ij}=\<X^i,X^j\>_t$. It is easy to see  that $C$ is symmetric and nonnegative definite. Thus, the function $\varphi:\bR^d\to\bR$ defined by 
$\varphi(\bm x):=\bm x^\top C\bm x$ is convex. Let
$\bm e_i=(0,\dots,0,1,0,\dots,0)$
denote the $i^{\text{th}}$ unit vector. 
Then the convexity of $\varphi$ gives
\begin{align*}
\sum_{i=1}^d\pi^i\<X^i,X^i\>_t&=\sum_{i=1}^d\pi^ic_{ii}=\sum_{i=1}^d\pi^i\bm e_i^\top C\bm e_i=\sum_{i=1}^d\pi^i\varphi(\bm e_i)\\
&\ge \varphi\Big(\sum_{i=1}^d \pi^i\bm e_i\Big)=\varphi(\bm\pi)=\bm\pi^\top C\bm\pi\\
&=\sum_{i,j=1}^d\pi^i\pi^j\<X^i,X^j\>_t.
\end{align*}
This completes the proof of \ref{const rebal prop thm 1}.

\ref{const rebal prop thm 2} This follows immediately by taking logarithms in \ref{const rebal prop thm 1}.

\ref{const rebal prop thm 3} Taking logarithms in our formula for $V_t^{\bm\pi}$ obtained in the proof of \ref{const rebal prop thm 1} yields
\begin{align*}
\log V_t^{\bm\pi}=\bm\pi\cdot(\bm X_t-\bm X_0)+\frac12\bigg(\sum_{i=1}^d\pi^i\<X^i\>_t-\sum_{i,j=1}^d\pi^i\pi^j\<X^i,X^j\>_t\bigg).
\end{align*}
By \eqref{A BV eq}, constant terms and functions of bounded variation do not contribute to the quadratic variation of a trajectory. Hence, $\<\log  V^{\bm\pi}\>_t=\<\bm\pi\cdot\bm X\>_t$. Letting $C$, $\varphi$, and $\bm e_i$ be as in the proof of part \ref{const rebal prop thm 1}, we obtain,
\begin{align*}
\<\log  V^{\bm\pi}\>_t&=\<\bm\pi\cdot\bm X\>_t=\sum_{i,j=1}^d\pi^i\pi^j\<X^i,X^j\>_t=\varphi(\bm\pi)\le\sum_{i=1}^d\pi^i\varphi(\bm e_i)\\
&=\sum_{i=1}^d\pi^i\<X^i,X^i\>_t=\sum_{i=1}^d\pi^i\<\log S^i\>_t.
\end{align*}

\ref{const rebal prop thm 4} Let $\psi(\bm x):=\sqrt{\varphi(\bm x)}$. Then $\psi$ is a seminorm and hence convex. We may thus argue similarly as in the proof of part \ref{const rebal prop thm 3}:
\begin{align*}
\sqrt{\<\log  V^{\bm\pi}\>_t}&=\sqrt{\<\bm\pi\cdot\bm X\>_t}=\sqrt{\sum_{i,j=1}^d\pi^i\pi^j\<X^i,X^j\>_t}=\psi(\bm\pi)\le\sum_{i=1}^d\pi^i\psi(\bm e_i)\\
&=\sum_{i=1}^d\pi^i\sqrt{\<X^i,X^i\>_t}=\sum_{i=1}^d\pi^i\sqrt{\<\log S^i\>_t}.
\end{align*}
This completes the proof.
\end{proof}

\section{Preliminaries on pathwise It\^o calculus}\label{Ito section}
 
 A \emph{partition} of the time axis $[0,\infty)$ is a set $\mathbb T=\{t_0,t_1,\dots\}$ of time points such that $0=t_0<t_1<\cdots$ and  $\lim_{n\uparrow\infty}t_n=+\infty$. By $|\mathbb T|:=\sup_{i\ge0}|t_{i+1}-t_i|$ we denote the mesh of $\mathbb T$. If a partition is fixed, it will be convenient to denote by $t'$ the successor of $t\in\mathbb T$, i.e.,
 $t'=\min\{u\in\mathbb T:u>t\}$.
  In the sequel, we fix a \emph{refining sequence of partitions} $(\mathbb T_n)_{n\in\mathbb N}$ of partitions of $[0,\infty)$. That is, each $\mathbb T_n$ is a partition of $[0,\infty)$ and we have $\mathbb T_1\subset\mathbb T_2\subset\cdots$ and $\lim_{n\uparrow\infty}|\mathbb T_n|=0$. 

A continuous trajectory $ X:[0,\infty)\to\mathbb R$ is said to admit the continuous quadratic variation $\langle X\rangle$ along $(\mathbb T_n)_{n\in\mathbb N}$ if 
$$\<X\>_t:=\lim_{n\uparrow\infty}\sum_{\substack{ 
s\in\bT_n\\ s'\le t}}(X_{s'}-X_s)^2
$$
exists for each $t\ge0$ and if $t\mapsto\<X\>_t$ is continuous. In this case, we write $X\in QV$. A vector-valued continuous trajectory  $\bm X:[0,\infty)\to\bR^d$ belongs to $QV^d$ if $X^i+X^j\in QV$ for all $i,j=1,\dots, d$. Then, in particular, $X^i\in QV$ and the covariation
$$\<X^i,X^j\>_t:=\frac12\big(\<X^i+X^j\>_t-\<X^i\>_t-\<X^j\>_t\big)
$$
is a well-defined continuous function of $t$ and of locally bounded total variation, because the quadratic variations on the right-hand side are continuous and nondecreasing functions of $t$. Moreover, if $X\in QV$ and $A\in C[0,\infty)$ is of locally bounded total variation, then both $A$ and $X+A$ belong to $QV$ and
\begin{equation}\label{A BV eq}
\<A\>_t=0\quad\text{and}\quad \<X,A\>_t=0\quad\text{for all $t$;}
\end{equation}
see \cite[Remark 8]{SchiedCPPI}.

The class  $C^{2,1}(U\times W)$, for open sets $U\subset\bR^d$ and $W\subset\bR^n$, consists of all functions $f(\bm x,\bm a)$ that are twice continuously differentiable in $\bm x\in U$ and continuously differentiable in $\bm a\in W$.  We will write $f_{a^k}$ for the partial derivative of $f$ with respect to the $k^{\text{th}}$ coordinate of the vector $\bm a=(a^1,\dots, a^n)$ and $f_{x^i}$ and $f_{x^ix^j}$ for the first and second partial derivatives with respect to the components $x^i$ and $x^j$ of $\bm x$. With $\nabla_{\bm x}f=(f_{x^1},\dots, f_{x^d})$ we denote the gradient of $f$ with respect to $\bm x$. Finally, the class of continuous functions from $[0,\infty)$ to a set $V\subset\bR^n$ that are locally of bounded variation will be denoted by $CBV([0,\infty),V)$.

\begin{theorem}[F\"ollmer \cite{FoellmerIto}]\label{FoellmerThm}  Suppose  that $\bm X\in QV^d$, that $\bm A\in CBV([0,\infty),\bR^n)$, and that $f\in C^{2,1}(\bR^d\times\bR^n)$. Then 
\begin{align*}
f(\bm X_t,\bm A_t)-f(\bm X_0,\bm A_0)&=\sum_{k=1}^n\int_0^tf_{a^k}(\bm X_s,\bm A_s)\,dA_s^k+\int_0^t\nabla_{\bm x}f(\bm X_s,\bm A_s)\,d\bm X_s\\
&\qquad+\frac12\sum_{i,j=1}^d\int_0^tf_{x^ix^j}(\bm X_s,\bm A_s)\,d\<X^i,X^j\>_s,
\end{align*}
where the integrals with integrators $A^k$ and $\<X^i,X^j\>$ are taken in the usual sense of Riemann--Stieltjes integrals and the {\bf pathwise It\^o integral} $\int_0^t\nabla_{\bm x}f(\bm X_s,\bm A_s)\,d\bm X_s$ is given by the following limit of nonanticipative Riemann sums,
\begin{equation}\label{pathwise Ito integral}
\int_0^t\nabla_{\bm x}f(\bm X_s,\bm A_s)\,d\bm X_s=\lim_{n\uparrow\infty}\sum_{\substack{ 
s\in\bT_n\\ s'\le t}}\nabla_{\bm x}f(\bm X_{s},\bm A_{s})\cdot(\bm X_{s'}-\bm X_{s}).
\end{equation}
\end{theorem}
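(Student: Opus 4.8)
The plan is to run F\"ollmer's classical telescoping-and-Taylor argument pathwise along the fixed refining sequence $(\bT_n)$. Fix $t>0$. Telescoping $f$ over the points of $\bT_n$ below $t$ gives
$$\sum_{\substack{s\in\bT_n\\ s'\le t}}\big(f(\bm X_{s'},\bm A_{s'})-f(\bm X_s,\bm A_s)\big)=f(\bm X_{\tau_n},\bm A_{\tau_n})-f(\bm X_0,\bm A_0),\qquad \tau_n:=\max\big(\bT_n\cap[0,t]\big),$$
and since $\tau_n\to t$ and $f,\bm X,\bm A$ are continuous, the right-hand side converges to $f(\bm X_t,\bm A_t)-f(\bm X_0,\bm A_0)$. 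For each increment I would move the two groups of variables one at a time,
$$f(\bm X_{s'},\bm A_{s'})-f(\bm X_s,\bm A_s)=\big[f(\bm X_{s'},\bm A_{s'})-f(\bm X_{s'},\bm A_s)\big]+\big[f(\bm X_{s'},\bm A_s)-f(\bm X_s,\bm A_s)\big],$$
then apply the mean value theorem to the first bracket (using that $f$ is $C^1$), obtaining $\sum_k f_{a^k}(\bm X_{s'},\zeta^n_s)(A^k_{s'}-A^k_s)$ for an intermediate point $\zeta^n_s$, and a second-order Taylor expansion in $\bm x$ with $\bm A_s$ frozen to the second bracket (using that $f$ is $C^2$ in $\bm x$), obtaining
$$\nabla_{\bm x}f(\bm X_s,\bm A_s)\cdot(\bm X_{s'}-\bm X_s)+\tfrac12\sum_{i,j}f_{x^ix^j}(\bm X_s,\bm A_s)(X^i_{s'}-X^i_s)(X^j_{s'}-X^j_s)+r^n_s,$$
where $r^n_s=\tfrac12\sum_{i,j}\big[f_{x^ix^j}(\theta^n_s,\bm A_s)-f_{x^ix^j}(\bm X_s,\bm A_s)\big](X^i_{s'}-X^i_s)(X^j_{s'}-X^j_s)$ for a point $\theta^n_s$ on the segment $[\bm X_s,\bm X_{s'}]$.

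Summing over $s$ and letting $n\to\infty$, I would then identify the limits of the four resulting pieces. The $\bm a$-piece $\sum_s\sum_k f_{a^k}(\bm X_{s'},\zeta^n_s)(A^k_{s'}-A^k_s)$ converges to $\sum_k\int_0^t f_{a^k}(\bm X_u,\bm A_u)\,dA^k_u$: it is a Riemann--Stieltjes sum for the $CBV$ integrators $A^k$ whose evaluation points $(\bm X_{s'},\zeta^n_s)$ converge uniformly on $[0,t]$, as $|\bT_n|\to0$, to $(\bm X_s,\bm A_s)$ by uniform continuity of $\bm X,\bm A$, while $u\mapsto f_{a^k}(\bm X_u,\bm A_u)$ is continuous. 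The remainder vanishes: $|\theta^n_s-\bm X_s|\le\sup_{|u-v|\le|\bT_n|}|\bm X_u-\bm X_v|\to0$, so by uniform continuity of the $f_{x^ix^j}$ on the compact set $\{(\bm X_u,\bm A_u):u\in[0,t]\}$ one has $\sum_s|r^n_s|\le\omega_n\cdot\tfrac d2\sum_i\sum_s(X^i_{s'}-X^i_s)^2$ with $\omega_n\to0$ and $\sum_s(X^i_{s'}-X^i_s)^2\to\<X^i\>_t$. The crux is the quadratic piece $\tfrac12\sum_s\sum_{i,j}f_{x^ix^j}(\bm X_s,\bm A_s)(X^i_{s'}-X^i_s)(X^j_{s'}-X^j_s)$, which should converge to $\tfrac12\sum_{i,j}\int_0^t f_{x^ix^j}(\bm X_u,\bm A_u)\,d\<X^i,X^j\>_u$: by the assumed existence of $\<X^i\>,\<X^j\>,\<X^i+X^j\>$ and polarization, the discrete covariations $u\mapsto\sum_{s\in\bT_n,\,s'\le u}(X^i_{s'}-X^i_s)(X^j_{s'}-X^j_s)$ converge to $\<X^i,X^j\>_u$; since these need not be monotone, I would bound the associated discrete signed measures in total variation via $|ab|\le\tfrac12(a^2+b^2)$ (their masses stay bounded by $\<X^i\>_t+\<X^j\>_t$ up to $o(1)$) and then interchange limit and integral using the uniform continuity of $u\mapsto f_{x^ix^j}(\bm X_u,\bm A_u)$ on $[0,t]$.

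Finally, the left-hand side $f(\bm X_t,\bm A_t)-f(\bm X_0,\bm A_0)$ is independent of $n$, and the three limits just described exist; hence the one remaining sum $\sum_{s\in\bT_n,\,s'\le t}\nabla_{\bm x}f(\bm X_s,\bm A_s)\cdot(\bm X_{s'}-\bm X_s)$ also converges, its limit is by definition the pathwise It\^o integral $\int_0^t\nabla_{\bm x}f(\bm X_u,\bm A_u)\,d\bm X_u$ in \eqref{pathwise Ito integral}, and rearranging the identity yields the claimed formula, with the existence of that It\^o integral along $(\bT_n)$ obtained as a by-product. The main obstacle is the quadratic-variation step: because a pathwise covariation need not come from a positive measure, one cannot simply invoke weak convergence of measures, so the passage to the limit must be justified by coupling the $|ab|\le\tfrac12(a^2+b^2)$ total-variation bound with the uniform continuity of the integrand; the Riemann--Stieltjes and remainder estimates are routine but rely on the same kind of uniform-continuity bookkeeping.
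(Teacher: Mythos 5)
The paper does not prove this statement: it is quoted as F\"ollmer's theorem and cited to \cite{FoellmerIto}, so there is no internal proof to compare against. Your argument is a correct reconstruction of F\"ollmer's classical proof (telescoping along $\bT_n$, mean value theorem in $\bm a$, second-order Taylor in $\bm x$, remainder controlled by the oscillation of $f_{x^ix^j}$ times the bounded discrete quadratic variations, and existence of the It\^o integral obtained as a by-product), including the standard extension to the $CBV$ component $\bm A$. Two small points of care: the uniform-continuity bound for the remainder must be taken on a compact \emph{neighborhood} of $\{(\bm X_u,\bm A_u):u\in[0,t]\}$ (since $\theta^n_s$ lies off the path), and your treatment of the signed discrete covariation measures is sound --- pointwise convergence of their distribution functions to the continuous function $\<X^i,X^j\>$ together with the $|ab|\le\tfrac12(a^2+b^2)$ total-variation bound yields convergence against continuous integrands via uniform approximation by step functions, which is equivalent to F\"ollmer's device of polarizing into positive measures.
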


The pathwise It\^o integral \eqref{pathwise Ito integral}   will also be called the \emph{F\"ollmer integral} in the sequel. Based on the preceding theorem, we can now introduce the following class of admissible integrands for the pathwise It\^o integral.

\begin{definition}\label{adm def}
A function $ \bm\xi:[0,\infty)\to\bR^d$ is called an \emph{admissible integrand for $\bm X\in QV^d$} if for each $T>0$ there exists $n\in\bN$,  open sets $U\subset\mathbb R^d$ and   $W\subset\mathbb R^n$,  a function $f\in C^{2,1}(U\times W)$, and  $\bm  A\in CBV([0,T],W)$ such that $\bm X_t\in U$ and  $\bm\xi_t=\nabla_{ \bm  x}f(\bm X_t,\bm A_t)$ for $0\le t\le T$.
\end{definition}

If $\bm\xi$ is an admissible integrand, then we can define the F\"ollmer integral $\int_0^t\bm\xi_s\,d\bm X_s$ via \eqref{pathwise Ito integral}. That is, 
$$\int_0^t\bm\xi_s\,d\bm X_s=\lim_{n\uparrow\infty}\sum_{\substack{ 
s\in\bT_n\\ s'\le t}}\bm\xi_s\cdot(\bm X_{s'}-\bm X_{s}).
$$
Since the right-hand side only involves the values $\bm\xi_s=\nabla_{ \bm  x}f(\bm X_s,\bm A_s)$, the F\"ollmer integral $\int_0^t\bm\xi_s\,d\bm X_s$ is independent of the choice of~$f$. 

\begin{definition}\label{DD def}
Let $X,Y\in QV$ and suppose that $Y_t>0$ for all $t$.
\begin{enumerate}
\item The \emph{Dol\'eans--Dade exponential} of $X$ is given by
$$\cE(X)_t:=\exp\Big(X_t-\frac12\<X\>_t\Big).
$$
\item The \emph{It\^o logarithm} of $Y$ is defined as
$$\cL(Y)_t:=\log Y_t+\frac12\<\log Y\>_t.
$$
\end{enumerate}
\end{definition}

In the context of the preceding definition, \eqref{A BV eq} implies  that
$$\cE(\cL(Y))_t=e^{\log Y_t}=Y_t
$$
and 
$$\cL(\cE(X))_t=\log e^{X_t-\frac12\<X\>_t}+\frac12\<X\>_t=X_t.
$$
Thus, $\cE$ and $\cL$ are inverses of one another. Moreover, the pathwise It\^o formula in Theorem \ref{FoellmerThm} yields that, for $Z_t:=\cE(X)_t$,
\begin{equation}\label{DD SDE eq}
Z_t=Z_0+\int_0^tZ_s\,dX_s,
\end{equation}
or in shorthand notation $dZ_t=Z_t\,dX_t$, and 
\begin{equation}\label{log Ito eq}
\cL(Y)_t=\log Y_0+\int_0^t\frac1{Y_s}\,dY_s.
\end{equation}
If $\bm Y=(Y^1,\dots, Y^d)\in QV^d$ and $Y^i_t>0$ for all $i$ and $t$, we write
\begin{equation}\label{vector cL eq}
\cL(\bm Y)=\big(\cL(Y^1),\dots,\cL(Y^d)\big).
\end{equation}


\bibliographystyle{plain}
\bibliography{CTbook}


\end{document}